
\documentclass[letterpaper, 10 pt, conference]{ieeeconf}      

\IEEEoverridecommandlockouts                              

\overrideIEEEmargins                                      




\usepackage{geometry}
\geometry{left=0.667in, right=0.667in, top=0.792in, bottom=0.597in}

\usepackage{amsthm}      
\usepackage{amsmath,amsfonts}
\usepackage{bbm}
\usepackage{amssymb}  
\usepackage{algorithmic}
\usepackage{algorithm}
\usepackage{array} 
\usepackage{textcomp}
\usepackage{stfloats}
\usepackage{url}
\usepackage{verbatim}
\usepackage{graphicx}
\usepackage{subfigure}
\usepackage{subfloat}

\usepackage[font=small,labelfont=bf]{caption}
\usepackage[utf8]{inputenc} 

\usepackage{tabularx}

\usepackage{siunitx}
\DeclareMathOperator*{\argmin}{argmin} 

\usepackage{bondgraphs}
\usepackage{tikz}
\usetikzlibrary{calc}
\usepackage{epstopdf}
\usetikzlibrary{shapes,fit}

\newtheorem{proposition}{Proposition}
\newtheorem*{objective*}{PBC objective}
\newtheorem*{tankobjective*}{Sampled PBC objective}
\newtheorem{remark}{Remark}

\usepackage{color}

\newcommand{\vect}[1]{\bm{#1}}		   
\newcommand{\matr}[1]{\bm{#1}}		   





\usepackage[style=ieee,
			url=false, 
			isbn=false,
			doi=true,
			giveninits=true,
			uniquename=init,
			isbn=false,
			backend=bibtex,
			minbibnames=1,
			maxbibnames=6	]{biblatex}

\bibliography{root.bib}

\title{\LARGE \bf
Lossless optimal transient control for rigid bodies in 3D space
}

\author{Riccardo Zanella$^{1}$, Federico Califano$^{1}$, Antonio Franchi$^{1}$, Stefano Stramigioli$^{1}$
\thanks{$^{1}$Robotics \& Mechatronics (RaM) group, University of Twente, The Netherlands.
       {\tt\small \{r.zanella, f.califano, a.franchi, s.stramigioli\}@utwente.nl}}
\thanks{This research has received funding from the European Union’s Horizon Europe Framework Programme under grant agreement No 101070596 (euROBIN).}
}

\begin{document}

\maketitle
\thispagestyle{empty}
\pagestyle{empty}

\begin{abstract}
In this letter, we propose a control scheme for rigid bodies designed to optimise transient behaviors. The search space for the optimal control input is parameterized to yield a passive, specifically lossless, nonlinear feedback controller. As a result, it can be combined with other stabilizing controllers without compromising the stability of the closed-loop system. The controller commands torques generating fictitious gyroscopic effects characteristics of 3D rotational rigid body motions, and as such does not inject nor extract kinetic energy from the system. We validate the controller in simulation using a model predictive control (MPC) scheme, successfully combining stability and performance in a stabilization task with obstacle avoidance constraints.
\end{abstract}


\section{Introduction}

Rigid body control is of utmost importance in most control problems in the field of aerospace and robotic systems. In fact, satellites \cite{ narkiewicz2020generic, yang2023uncertain}, underwater vehicles \cite{liu2020trajectory, neira2021review}, and aerial robots \cite{su2021fast, golestani2022prescribed} are usually modeled as rigid bodies.

Over the past three decades, the concept of energy has gained significant attention in engineering practice. In a control-theoretic setting, the core idea of energy-based methods is to treat dynamical systems as physical entities able to exchange energy with other systems rather than to process signals \cite{ortega2001putting}. In this context, passivity-based control (PBC) has fully established itself as a branch of nonlinear control \cite{vanderSchaftL2}, aiming to stabilise systems independently on external environmental interactions using energy-based arguments.

The inherent passivity of mechanical systems has been extensively utilized in designing rigid body attitude controllers, as this property often allows the development of robustly stable controllers without relying on restrictive assumptions. %
As examples, in \cite{FORNI2015164}, a standard energy-balancing PBC (EB-PBC) is used for set-point tracking in a rigid body attitude control problem leveraging on port-Hamiltonian formulation. In \cite{8786163} the problem of interaction control of fully actuated unmanned aerial vehicle (UAV) is considered. There the UAV is modeled as a floating rigid body on the special Euclidean group SE(3) and an EB-PBC unified motion and impedance controller is designed. In \cite{mohammadi2021passivity} a PBC scheme is proposed for the stable co-transportation of a cable-suspended payload by a number of quadrotors.

A notable limitation of this methodology is that the low-level controller design is focusing on achieving passivity of the closed-loop system, without optimizing task performance. This aspect produced the highly debated assertion that passivity is associated to a loss of performance in the control due to the choice of over-conservative gains \cite{dimeas2016online,Califano2022OnSystems}. 

Recently, researchers have begun to investigate how to integrate optimization techniques into passive controllers, aiming to preserve passivity (and as a consequence stability) of the closed-loop system while also maximizing performance.
For example, in \cite{raff2007nonlinear}, a nonlinear model predictive control (MPC) scheme has been combined for the first time with the passivity concept to merge performance characteristic of MPC with closed-loop stability guarantees. 
For example, \cite{raff2007nonlinear} was the first to combine a nonlinear model predictive control (MPC) scheme with the passivity concept, effectively merging the performance benefits of MPC with the stability guarantees of passive control. This method guarantees closed-loop stability by incorporating a passivity constraint directly into online optimization.
Conversely, a quasi-Linear-Parameter-Varying MPC framework from \cite{cisneros2018dissipativity} removes the need of extra constraints or terminal costs in optimization by ensuring stability with dissipativity arguments, given certain nonlinearity conditions in constrained MPC are satisfied.
More recently, other works propose to carefully integrate the expressive power of neural networks into PBC schemes \cite{zanella2024learning, massaroli2022optimal, pmlr-v168-plaza22a} by systematically designing parametrised passive controllers through the solution of an optimization problem.
Other works combine energy-based schemes and optimisation tools in a model-free fashion, where the PBC design involving energy tanks are merged in a reinforcement learning framework \cite{zanella2024learningpassive,sacerdoti2024reinforcement}.

In this work, we propose a new approach that integrates optimization with PBC methods to optimize transient behavior while ensuring passivity. We focus on controlled rigid bodies where a stabilizing controller is employed, and present a novel parameterization of an extra passive control input, able to shape the transient response while preserving stability. This approach strikes a balance between the expressiveness of the parameterization (capable of accommodating diverse dynamic responses) and the requirement to preserve closed-loop stability.
The main contributions of this work can be summarized as follows:
\begin{itemize}
    \item[-] We introduce a novel framework that parametrizes the search space for the optimal control input, ensuring a structurally passive nonlinear feedback controller.
    \item[-] We show how the net effect of the proposed framework is to produce fictitious gyroscopic effects, effectively exploiting the nonlinear 3D dynamics of rigid body systems during transient phases.
    \item[-] We discuss how the proposed control methodology integrates with other stabilizing controllers and optimization-based techniques to shape transient behaviours while preserving closed-loop stability.
    \item[-] We validate the method through simulations, optimizing transient behaviour with an MPC scheme for obstacle avoidance, while a stabilizing nominal controller drives the rigid body to a target.  
\end{itemize}

The paper is organised as follows. Section \ref{sec:background} contains the background material comprising passivity theory and rigid body modeling. Section \ref{sec:main} contains the theoretical contribution of this work in which the proposed optimal control scheme is presented. Section \ref{sec:sim} contains the description, simulation results and discussion of the proposed case study, while Section \ref{sec:conc} contains conclusions and future work. 

\section{Background}
\label{sec:background}

In this section we briefly report the relevant background information about passivity and rigid body modelling, which can be found in detail e.g., in \cite{vanderSchaftL2}.
Throughout the paper, scalars and maps are denoted by regular letters, while vectors and matrices are denoted by bold letters. The skew-symmetric representation of a three-dimensional vector $\vect{\omega} \in \mathbb{R}^3$ is denoted $\tilde{\vect{\omega}}=-\tilde{\vect{\omega}}^{\top} \in \mathbb{R}^{3 \times 3}$. Vector products of two vectors $\vect{a}\in \mathbb{R}^3$ and $\vect{b}\in \mathbb{R}^3$ is denoted by $\vect{a} \times \vect{b}$ and we will use the property $\vect{a} \times \vect{b}=\tilde{\vect{a}} \vect{b}$.

\subsection{Passivity}

Consider the affine nonlinear control system:
\begin{equation}
\label{eq:nonlinearaffinesystem}
    \dot{\vect{x}}=f(\vect{x})+g(\vect{x})\vect{u}
\end{equation}
where $\vect{x}\in \mathcal{D} \subseteq \mathbb{R}^n$ is the state, $\vect{u}\in \mathcal{U} \subset \mathbb{R}^m$ is the input, $f: \mathbb{R}^{n} \to \mathbb{R}^{n}$ and $g: \mathbb{R}^{n} \to \mathbb{R}^{n \times m}$ are continuously differentiable maps. As a consequence, a Lipschitz continuous controller guarantees the existence and uniqueness of solutions of (\ref{eq:nonlinearaffinesystem}). 
A system in the form (\ref{eq:nonlinearaffinesystem}) equipped with an output: \[\vect{y}=h(\vect{x})\in \mathcal{Y} \subset \mathbb{R}^m\] is said to be \textit{passive} with respect to a differentiable \textit{storage function} $S : \mathcal{D} \to \mathbb{R}^+$ and input-output pair $(\vect{u},\vect{y})$, if the following inequality holds $\forall \vect{u} \in \mathcal{U}$:

\begin{equation}
\label{eq:passivity}
 \dot{S}=\partial_{\vect{x}}^{\top} S(\vect{x}) (f(\vect{x})+g(\vect{x})\vect{u}) \leq \vect{y}^{\top}\vect{u}.
\end{equation}

For physical systems, where $S(\vect{x})$ represents the energy and $\vect{y}^{\top}\vect{u}$ the power flowing in the system, condition (\ref{eq:passivity}) is a statement of energy conservation, i.e., the variation of energy in the system is bounded by the power flowing in the system. The inequality margin in (\ref{eq:passivity}) is due to the \textit{dissipation} of the system $d(\vect{x})=-\partial_{\vect{x}}^{\top} S(\vect{x}) f(\vect{x})$. For passive systems, it holds $d(\vect{x})\geq 0$ and the case $d(\vect{x})=0$ (i.e., (\ref{eq:passivity}) with an equal sign), system (\ref{eq:nonlinearaffinesystem}) is said to be \textit{lossless}.
Passivity is directly related to local stability of the minimum of $S(\vect{x})$, under the condition that the storage function qualifies as a Lyapunov function. In fact, it is easy to see that in this case passivity condition (\ref{eq:passivity}) with $\vect{u}=0$ makes the minimum of the storage/Lyapunov function $S(\vect{x})$ locally stable as $\dot{S}=-d(\vect{x})\leq 0$. The rate of convergence to the equilibrium depends on the value of the dissipation $d(t)$ and, as limit case, if the system is lossless, the system evolves on isolines of $S(\vect{x})$. Notably, passivity yields a stronger property than just stability: passivity can be shown to be necessary for a robust form of stability when external interactions influence the system through the input-output pair $(\vect{u},\vect{y})$, see \cite{Califano2022OnSystems,Capelli2022PassivityEnergy,Stramigioli2015Energy-AwareRobotics} for more details.

This energetic analysis can be extended to closed-loop systems and, as such, used for control design. The so-called \textit{passivity-based control} (PBC) methods aim to design controllers such that the closed-loop system is passive, providing an energetic interpretation of the controlled system and a form of robust stability with respect external and parametric disturbances.

\subsection{Rigid Body Model}

Consider a rigid body evolving in Euclidean 3D space with mass $m$ and inertia tensor $\matr{J}$.
Let $\Psi_0$ and $\Psi_b$ be the inertial and body-fixed frame, where the latter is chosen to be the \textit{principal inertial frame}, which is a particular body-fixed frame with origin coincident with the centre of mass of the rigid body and in which its inertia tensor $\matr{J}\in \mathbb{R}^{3 \times 3}$ is a diagonal matrix. 
The rigid body configuration is encoded in the Lie group element $\matr{H} \in SE(3)$ connecting $\Psi_0$ to $\Psi_b$, which contains, in its usual matrix form, the rotation matrix $\matr{R}\in SO(3)$ and the distance $\matr{p}\in \mathbb{R}^3$.
%
%
The kinematic variables are completed by the generalised velocity of the rigid body, composed by the linear velocity $\vect{v}=\dot{\vect{p}}\in \mathbb{R}^3$, and the angular velocity $\vect{\omega}\in \mathbb{R}^3$ expressed in $\Psi_b$. The relation between $\vect{\omega}$ and $\vect{R}$ is encoded in the kinematic identity 
\begin{equation}
    \label{eq:omegatilde}
\tilde{\vect{\omega}}=\vect{R}^{\top}\dot{\vect{R}},
\end{equation}
where the Lie algebra element $\tilde{\vect{\omega}}\in so(3)$ is represented by a skew-symmetric matrix and is isomorphic to $\vect{\omega}=(\omega_x,\omega_y,\omega_z)^{\top}$ through: 
\begin{equation}
    \tilde{\vect{\omega}} = \begin{bmatrix}
    0 & -\omega_z & \omega_y \\
    \omega_z & 0 & -\omega_x \\
    -\omega_y & \omega_x & 0
\end{bmatrix}.
\end{equation}

A state-space representation of frictionless equations of a rigid body are:
\begin{align}
\label{eq:kin1}
    \dot{\vect{p}}&=\vect{v} \\
    \label{eq:kin2}
    \dot{\matr{R}}&=\matr{R} \tilde{\vect{\omega}} \\
    \label{eq:dyn1}
    m\dot{\vect{v}} &=- m g\vect{e}_3 + \matr{R} \vect{f} \\
    \label{eq:dyn2}
    \matr{J} \dot{\vect{\omega}} &=- \vect{\omega} \times \matr{J} \vect{\omega} + \vect{\tau},
\end{align}
where $\vect{e}_3$ is the direction of gravity, and $\matr{f}\in \mathbb{R}^3$ and $\matr{\tau}\in \mathbb{R}^3$ are the force and momentum applied at the center of mass of the rigid body and expressed in $\Psi_b$. Notice that the first two equations are kinematic identities, while the last two are the dynamic equations.


It is easy to see that, after the application of a gravity compensation term in the form:
\begin{equation}
\label{eq:gravitycomp}
     \matr{R}\vect{f}=m g \vect{e}_3+ \vect{f}',
\end{equation}
system (\ref{eq:dyn1}-\ref{eq:dyn2}) is lossless with respect to the kinetic energy 
\begin{equation}
    K(\vect{\omega},\vect{v})=\frac{1}{2}m\vect{v}^{\top} \vect{v} + \frac{1}{2}\vect{\omega}^{\top} \matr{J} \vect{\omega}
\end{equation}
as a storage function, and input/output pairs $(\vect{f}', \vect{v})$ and $(\vect{\tau} ,\vect{\omega})$. In fact, calculating the dissipation inequality and using (\ref{eq:gravitycomp}), one obtains:
\begin{equation}
\label{eq:dissInequality}
    \dot{K}=\vect{v}^{\top}\vect{f}'+\vect{\omega}^{\top}(\vect{\tau}-\vect{\omega} \times \matr{J} \vect{\omega})=\vect{v}^{\top}\vect{f}'+\vect{\omega}^{\top}\vect{\tau}.
    \end{equation}
The latter follows from the key identity 
\begin{equation}
\label{eq:tripleproduct}
     \vect{\omega}^{\top}(\vect{\omega} \times \matr{a})=\vect{\omega}^{\top}(\tilde{\vect{\omega}} \matr{a})=-(\tilde{\vect{\omega}}\vect{\omega})^{\top} \matr{a}=-({\vect{\omega} \times \vect{\omega}})\vect{a}=0,
\end{equation}
which holds true for any $\matr{a}\in \mathbb{R}^{3}$. 
\section{Lossless control and Optimisation}\label{sec:main}

The following results motivates the choice of controller proposed in the sequel.
\begin{proposition}
\label{prop:1}
    Consider system (\ref{eq:dyn1}-\ref{eq:dyn2}) with control (\ref{eq:gravitycomp}) and any torque controller in the form
    \begin{equation}
        \label{eq:tau}
        \vect{\tau}=\vect{\omega} \times \vect{a}+\vect{\tau}'
    \end{equation}for any (possibly state-dependent) term $\vect{a}\in\mathbb{R}^3$.  Then,  the closed-loop system is passive with respect to its kinetic energy as storage function and input/output pairs $(\vect{f}', \vect{v})$ and $(\vect{\tau}' ,\vect{\omega})$.  
\end{proposition}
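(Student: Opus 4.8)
The plan is to run the new torque parameterization through the energy balance that has already been established in the background. First I would recall that, under the gravity compensation (\ref{eq:gravitycomp}), the open-loop dynamics (\ref{eq:dyn1}-\ref{eq:dyn2}) satisfy the exact power balance $\dot{K}=\vect{v}^{\top}\vect{f}'+\vect{\omega}^{\top}\vect{\tau}$ obtained in (\ref{eq:dissInequality}) — note this is already an equality, i.e.\ the uncontrolled system is lossless, so there is no residual dissipation term to track.

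Next I would substitute the proposed control law (\ref{eq:tau}), $\vect{\tau}=\vect{\omega}\times\vect{a}+\vect{\tau}'$, into this balance, yielding $\dot{K}=\vect{v}^{\top}\vect{f}'+\vect{\omega}^{\top}(\vect{\omega}\times\vect{a})+\vect{\omega}^{\top}\vect{\tau}'$. The single substantive step is to annihilate the middle term: apply the triple-product identity (\ref{eq:tripleproduct}) with $\vect{a}$ in place of the generic vector, giving $\vect{\omega}^{\top}(\vect{\omega}\times\vect{a})=0$. The point to make explicit is that this holds pointwise for every value of $\vect{\omega}$ and $\vect{a}$, hence it is insensitive to $\vect{a}$ being state-dependent (or time-varying): the identity is algebraic and evaluated instantaneously, so no regularity or boundedness assumption on $\vect{a}$ is needed for the energy argument. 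One then concludes $\dot{K}=\vect{v}^{\top}\vect{f}'+\vect{\omega}^{\top}\vect{\tau}'$, which is precisely the passivity condition (\ref{eq:passivity}) — in fact with equality — for the storage function $K$ and the input/output pairs $(\vect{f}',\vect{v})$ and $(\vect{\tau}',\vect{\omega})$.

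I would close by remarking that, since the resulting relation is an equality, the closed loop is not merely passive but lossless with respect to $K$; this matches the interpretation, exploited later in the paper, that $\vect{\omega}\times\vect{a}$ is a fictitious gyroscopic torque that performs no work on the rigid body. There is essentially no hard step here; the only care needed is presentational — stressing that $\vect{a}$ enters through feedback so that the closed loop is still of the form (\ref{eq:nonlinearaffinesystem}) with $(\vect{f}',\vect{\tau}')$ as the new inputs, and noting (for the stability statements that follow) that Lipschitz continuity of $\vect{a}$ is what guarantees well-posedness of the closed-loop solutions, even though it plays no role in the power balance itself.
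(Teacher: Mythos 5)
Your proof is correct and follows essentially the same route as the paper: both compute $\dot{K}$ along the closed loop and kill the term $\vect{\omega}^{\top}(\vect{\omega}\times\vect{a})$ via the triple-product identity (\ref{eq:tripleproduct}), arriving at $\dot{K}=\vect{v}^{\top}\vect{f}'+\vect{\omega}^{\top}\vect{\tau}'$. Your added observations --- that the balance holds with equality (losslessness) and that the identity is pointwise so state-dependence of $\vect{a}$ is immaterial --- are accurate and consistent with the paper's discussion.
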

\begin{proof}
The closed-loop system results in:
\begin{align}
    \label{eq:1kin1}
    \dot{\vect{p}}&=\vect{v} \\
    \label{eq:1kin2}
    \dot{\matr{R}}&=\matr{R} \tilde{\vect{\omega}} \\
    \label{eq:1dyn1}
    m\dot{\vect{v}} &= \vect{f}' \\
    \label{eq:1dyn2}
    \matr{J} \dot{\vect{\omega}} &=- \vect{\omega} \times (\matr{J} \vect{\omega}-\vect{a}) + \vect{\tau}'.
\end{align} 
Passivity follows by the same calculation as in (\ref{eq:dissInequality}) with the extra zero addend (\ref{eq:tripleproduct}), resulting in 
\begin{equation}
\label{eq:dissipativityproof}
    \dot{K}=\vect{v}^{\top}\vect{f}'+\vect{\omega}^{\top}(\vect{\tau}'-\vect{\omega} \times (\matr{J} \vect{\omega}-\vect{a}))=\vect{v}^{\top}\vect{f}'+\vect{\omega}^{\top}\vect{\tau}'.
\end{equation} 
\end{proof}

\begin{proposition}
\label{prop:2}
    Consider the system (\ref{eq:kin1}-\ref{eq:kin2}-\ref{eq:dyn1}-\ref{eq:dyn2}) with control (\ref{eq:gravitycomp}) and (\ref{eq:tau}). Suppose that the controllers commanding $\vect{f}'$ and $\vect{\tau}'$ are designed to stabilize a minimum of a Lyapunov function of the form $V(\matr{R},\vect{p},\vect{\omega},\vect{v})=K(\vect{\omega},\vect{v})+U(\matr{R},\vect{p})$, where $K(\vect{\omega},\vect{v})$ is the kinetic energy  and $U(\matr{R},\vect{p})$ is a closed-loop potential function. Then, the closed-form expression of $\dot{V}$ is invariant under any choice of $\vect{a}\in\mathbb{R}^3$.
\end{proposition}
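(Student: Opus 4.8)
The plan is to differentiate $V$ along the closed-loop trajectories (\ref{eq:1kin1})--(\ref{eq:1dyn2}) and check, term by term, that $\vect{a}$ never survives. Write $\dot V = \dot K + \dot U$, with $K(\vect{\omega},\vect{v})$ the kinetic energy and $U = U(\matr{R},\vect{p})$ the closed-loop potential.

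First I would treat $\dot U$. By the chain rule, $\dot U$ is a linear function of $\dot{\matr{R}}$ and $\dot{\vect{p}}$ with coefficients depending on $(\matr{R},\vect{p})$; substituting the kinematic identities (\ref{eq:1kin1})--(\ref{eq:1kin2}), $\dot U$ becomes a function of $(\matr{R},\vect{p},\vect{\omega},\vect{v})$ alone. In particular it involves neither $\dot{\vect{\omega}}$ nor $\dot{\vect{v}}$, hence no $\vect{a}$.

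Next, $\dot K = m\,\vect{v}^\top \dot{\vect{v}} + \vect{\omega}^\top \matr{J}\dot{\vect{\omega}}$. Substituting the closed-loop dynamics (\ref{eq:1dyn1})--(\ref{eq:1dyn2}) gives $\dot K = \vect{v}^\top\vect{f}' + \vect{\omega}^\top\vect{\tau}' - \vect{\omega}^\top\bigl(\vect{\omega}\times(\matr{J}\vect{\omega}-\vect{a})\bigr)$, and the last term vanishes identically \emph{for every} $\vect{a}$ by the triple-product identity (\ref{eq:tripleproduct}) applied with $\matr{a}=\matr{J}\vect{\omega}-\vect{a}$. Thus $\dot K = \vect{v}^\top\vect{f}' + \vect{\omega}^\top\vect{\tau}'$ (this is exactly Proposition~\ref{prop:1}), which is manifestly free of $\vect{a}$. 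Combining, $\dot V = \vect{v}^\top\vect{f}' + \vect{\omega}^\top\vect{\tau}' + \dot U(\matr{R},\vect{p},\vect{\omega},\vect{v})$, a closed-form expression in which $\vect{a}$ does not appear.

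The only point worth spelling out is the reading of ``invariant'': the stabilizing feedback laws producing $\vect{f}'$ and $\vect{\tau}'$ are the nominal ones, designed without reference to $\vect{a}$, so that whatever state-feedback expressions $\vect{f}'(\cdot)$ and $\vect{\tau}'(\cdot)$ are plugged in, the displayed formula for $\dot V$ keeps the same functional form for every admissible $\vect{a}$. Hence any decrease property of $\dot V$ certified for the nominal loop (for instance $\dot V \le 0$, or strict decrease away from the target) carries over verbatim to the augmented loop -- which is the whole point, since $\vect{a}$ may then be treated as a free decision variable (e.g.\ optimized online) without disturbing the Lyapunov certificate. There is essentially no analytical obstacle: the content is the chain-rule bookkeeping plus the single cancellation supplied by (\ref{eq:tripleproduct}).
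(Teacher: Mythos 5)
Your proof is correct and follows essentially the same route as the paper's: the paper's one-line argument is exactly that the term $\vect{\omega}\times\vect{a}$ enters only through $\dot K$, where it contributes zero by the calculation in (\ref{eq:dissipativityproof}), while your write-up merely makes explicit the complementary observation that $\dot U$ depends only on the kinematics and hence never sees $\vect{a}$. No gap; your version is just a more detailed bookkeeping of the same cancellation via (\ref{eq:tripleproduct}).
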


\begin{proof} 
To prove invariance of the expression of $\dot{V}$ under the term $\vect{\omega} \times \vect{a}$ and any controller $\vect{f}'$ and $\vect{\tau}'$, it suffices to observe that the control term $\vect{\omega} \times \vect{a}$ enters only in $\dot{K}$, producing a null additive contribution as in (\ref{eq:dissipativityproof}).
\end{proof}
The previous result can be notably interpreted from an energetic perspective. In fact, a control torque parameterised as $\vect{\tau}=\vect{\omega} \times \vect{a}$ purely \textit{routes} mechanical energy in the system, redistributing the components of the angular velocity of the rigid body without contributing to a change of kinetic energy. In this sense, the mechanism is analogous to what 3D \textit{gyroscopic terms} do, which are in fact an instance of $\vect{\omega} \times \vect{a}$ with $\vect{a}=-\matr{J}\vect{\omega}$, as clear in (\ref{eq:dyn2}).

\begin{remark}[Closed-loop inertia shaping]
Although it might be tempting to think that the fictitious gyroscopic effects $\vect{\vect{\omega}} \times \vect{a}$ would effectively shape the inertia of the rigid body, simple calculations show that the rotational closed-loop dynamics
\begin{equation}
    \label{eq:closed-loop}
    \matr{J} \dot{\vect{\omega}} = -\vect{\omega} \times (-\vect{a}+\matr{J} \vect{\omega})+\vect{\tau}'
\end{equation}
is not equivalent to the dynamics of a rigid body with desired inertia $\vect{J}_d$
\begin{equation}
\label{eq:rigiddesired}
    \matr{J}_d \dot{\vect{\omega}} = - \vect{\omega} \times \matr{J}_d \vect{\omega}+\vect{\tau}',
\end{equation}
unless
\begin{equation}
    \tilde{\vect{\omega}}\vect{a}= \tilde{\vect{\omega}}\matr{J}\vect{\omega}-\matr{J}\matr{J}_d^{-1}\tilde{\vect{\omega}}\matr{J}_d\vect{\omega}+(\matr{J}\matr{J}_d^{-1}-\matr{I}_3)\vect{\tau}'
\end{equation}
holds true. Notice that this matching condition is in general not solvable for $\vect{a}$, and also the case which is normally referred to as ``feedback linearisation" $\vect{a}=\vect{J}\vect{\omega}$ does not generate any physically plausible closed-loop system (\ref{eq:rigiddesired}), unless $\vect{J}_d=\vect{J}=\vect{I}_3$, i.e., the rigid body is dynamically equivalent to a sphere and the controller does nothing.
\end{remark}

From a stabilization perspective, Proposition \ref{prop:2} produces a degree of freedom in the control design: If the controllers commanding $\vect{f}'$ and $\vect{\tau}'$ have been designed to achieve closed-loop stability encoded in the Lyapunov function $V(\matr{R},\vect{p},\vect{\omega},\vect{v})=K(\vect{\omega},\vect{v})+U(\matr{R},\vect{p})$, $V$ would still be a closed-loop Lyapunov function for an additional control term in the form $\vect{\omega} \times \vect{a}$. As a consequence, this control term can be chosen to achieve desired \textit{transient behaviors} of rigid bodies, complementing stabilising controllers with extra capabilities functional to specific task executions without compromising stability.

With this observation in mind, we propose the following abstract optimal control problem (OCP) in the time interval $t\in [t_i,t_f]$:
\begin{equation}
\label{eq:LQ}
\begin{aligned}
\vect{a}^*(t)= & \argmin_{\vect{a}(t)\in \mathcal{A}} \quad  C(\vect{R},\vect{p},\vect{\omega},\vect{v},\vect{\tau}',\vect{f}',\vect{a},t_i,t_f)\\
 & \textrm{s.t.}  \quad  \quad \textrm{(Dynamics (\ref{eq:1kin1}-\ref{eq:1kin2}-\ref{eq:1dyn1}-\ref{eq:1dyn2}) holds})\\
 & \quad \quad \quad \, \textrm{(Actuation constraints)}\\
 & \quad \quad \quad \,
 \textrm{(Task-based constraints)}.
 \end{aligned}
\end{equation}
Here $C$ represents a general cost function, which can be composed of different addends like control effort, regularizing terms, and soft task-based constraints. The hard constraints are composed by the dynamics of the system (\ref{eq:1kin1}-\ref{eq:1kin2}-\ref{eq:1dyn1}-\ref{eq:1dyn2}), the actuation constraints (e.g., limits in maximal torques or velocities), and/or hard task constraints (e.g., obstacle avoidance). 

The specific technique to tackle the OCP (\ref{eq:LQ}), which comprises the nature of the search space $\mathcal{A}$, is driven by the specific application. The goal of presenting the OCP in this abstract form is to highlight its generality, i.e., whatever technique is used to actually solve the OCP, the extra torque term $\vect{\tau}=\vect{a} \times \vect{\omega}$ will not interfere with the stability properties of the closed-loop system.  Section \ref{sec:sim} will provide a detailed example of mathematical representation for the constraints and the cost function.


\section{Case studies and Simulations}\label{sec:sim}

\subsubsection{Discrete model and software setup}
In the following, we will present a \textit{non-linear model predictive control} (NMPC) solution to the problem. This choice is motivated by the capability of MPC schemes to integrate generic constraints in an optimal control framework, making it a good candidate to implement the abstract problem (\ref{eq:LQ}).
%
The NMPC uses a temporal discretisation of the system and solves the OCP (\ref{eq:LQ}) at each time step in the prediction horizon $[t_i,t_f]$. Within this framework, the equations (\ref{eq:kin1}), (\ref{eq:dyn1}) are integrated numerically using the  Euler's method with sample time $T$:
\begin{align} 
    \vect{p}_{k+1} &= \vect{p}_k + T\vect{v}_k \label{eq:dt_kin1}\\
    m\vect{v}_{k+1} & = \vect{v}_k - m g T\vect{e}_3 + T\matr{R}_k\vect{f}_k \label{eq:dt_dyn1}.
\end{align} 
Meanwhile, the explicit Lie-Newmark method \cite{simo1988dynamics}, is applied to (\ref{eq:kin2}), (\ref{eq:dyn2}) to ensure that the predicted $\matr{R}_{k+1}$ remains in $S O(3)$. Given the current configuration at the $k$-th instant of time $\left(\matr{R}_k, \vect{\omega}_k\right)$, this method updates $(\matr{R}_{k+1}, \vect{\omega}_{k+1})$ using the following iteration rule:
\begin{align} 
    \matr{R}_{k+1} & =\matr{R}_k \operatorname{cay}\left(T \vect{\omega}_{k+\frac{1}{2}}\right) \label{eq:dt_kin2}\\
    \vect{\omega}_{k+1} & =\vect{\omega}_{k+\frac{1}{2}}+\frac{T}{2} \matr{J}^{-1}\left(\matr{J}  \vect{\omega}_{k+\frac{1}{2}} \times \vect{\omega}_{k+\frac{1}{2}}+\vect{\tau}_k\right) \label{eq:dt_dyn2}
\end{align} 
where $\vect{\omega}_{k+\frac{1}{2}} =\vect{\omega}_k+\frac{T}{2} \matr{J} ^{-1}\left(\matr{J} \vect{\omega}_k \times \vect{\omega}_k+\vect{\tau}_k\right)$ and $\operatorname{cay}(\cdot)$ is the Cayley map defined by
$
\operatorname{cay}(x)=\left(I_3+\frac{1}{2} \tilde{x}\right)\left(I_3-\frac{1}{2} \tilde{x}\right)^{-1}
$, 
with $I_3$ the identity matrix. 

 Employing this approximation alongside a discrete-time cost function specially formulated for the task, the OCP (\ref{eq:LQ}) transforms into a finite-dimensional nonlinear program (NLP). 
The NLP in our simulations is solved by CasADI \cite{andersson2019casadi}, a toolbox for Algorithmic Differentiation, interfaced with IPOPT \cite{wachter2002interior} solver.

\subsubsection{Nominal controller}
To validate the proposed control scheme, we consider a rigid body controlled by (\ref{eq:gravitycomp}) and (\ref{eq:tau}), where $\vect{f}' =\vect{0}$, while $ \vect{\tau}'$ is designed to ensure the stability of the closed-loop system around the desired orientational equilibrium $\matr{R}=\matr{R}_d\in SO(3)$. For this purpose we use the controller developed in \cite{rashad2019port}: 
\begin{equation} \label{eq:rot_ctrl}
    \vect{\tau}'=-2\operatorname{sk}(\matr{G} \matr{R}_d^\top \matr{R}) - k_D \vect{\omega},
\end{equation}  
where $\operatorname{sk}(A)$ denotes the anti-symmetric part of a matrix $A$, given by $\operatorname{sk}(A)=\frac{A-A^\top}{2}$, while $\matr{G}\in\mathbb{R}^{3\times3}$ and $k_D\in\mathbb{R}$ are the symmetric orientational
co-stiffness matrix and the positive damping gain, respectively. 
As shown in \cite{rashad2019port}, the controller (\ref{eq:rot_ctrl}) makes the closed-loop system almost globally asymptotically stable around the desired orientational equilibrium $\matr{R}=\matr{R}_d\in SO(3)$, as can be checked using the Lyapunov candidate  
\begin{equation}\label{eq:Lyapunov}
V(\matr{R},\vect{p},\vect{\omega},\vect{v})=K(\vect{\omega},\vect{v})+U(\matr{R},\vect{p})
\end{equation}
with $ K(\vect{\omega},\vect{v})= \frac{1}{2}m\vect{v}^\top \vect{v}+\frac{1}{2}\vect{\omega}^\top \matr{J}\vect{\omega}$ and $ U(\matr{R},\vect{p}) = -\operatorname{tr}(\matr{G}(\matr{R}_d^\top \matr{R}-\matr{I}_3))$, where $ \operatorname{tr}(\cdot)$ is the matrix trace.

\subsubsection{Task description}
As a use case scenario, we consider a rigid body moving at a constant linear velocity ($\vect{f}'=\vect{0}$) toward a wall with a slit. The rigid body is thinner along one spatial dimension, and as such the object must align its predominant axes with the slit to avoid collision. We introduce a constraint to encode both the orientation error $\varepsilon_{R,R_\star}$ and the distance $d_{p,p_\star}$ between the rigid body and the slit:
\begin{equation} \label{eq:constraint}
    \frac{\varepsilon_{R,R_\star}}{ d_{p,p_\star}+\epsilon_1}<\epsilon_2
\end{equation}
where we defined 
\begin{align} 
\varepsilon_{R,R_\star} :=& 1-(\vect{e}_3^\top R^\top R_\star \vect{e}_3)^2\\ 
d_{p,p_\star} :=& \left(\vect{e}_1^\top (\vect{p}-\vect{p}_\star)\right)^2 + \left(\vect{e}_2^\top (\vect{p}-\vect{p}_\star)\right)^2
\end{align}
with 
$\vect{e}_1=[1 \ 0 \ 0]^{\top},\vect{e}_2=[0 \ 1 \ 0]^{\top},\vect{e}_3=[0 \ 0 \ 1]^{\top}$, while $\vect{p}_\star\in \mathbb{R}^3$ and $\matr{R}_\star\in SO(3)$  are respectively the position and orientation of the slit with respect to the inertial frame.  The design of this constraint reflects that as the rigid body approaches the slit (smaller $d_{p,p_\star}$), misalignment becomes increasingly critical. The parameters $\epsilon_1 \in \mathbb{R}^+$ prevents numerical instability or overly aggressive control responses, whereas $\epsilon_2 \in \mathbb{R}^+
$ limits the orientation error relative to distance. Reducing $\epsilon_2$ increases sensitivity to alignment errors as the body approaches the slit.

\subsubsection{Simulation with nonlinear MPC}
The system is simulated for $50 \si{s}$ by discretising the system dynamics (\ref{eq:dt_kin1})-(\ref{eq:dt_dyn2}) for $25000$ steps. The rigid body has $m=1 \si{kg}$ and $\matr{J}=\operatorname{diag}(1.5,1.5,0.5) \si{kg\ m^2}$, with the following initial conditions: $\vect{p}_0 = \left[ 0 \ \ 0\ \ 0.5\right]\si{m}$, $\vect{v}_0 = \left[ 0 \ \ 0.1\ \ 0\right]\si{m/s}$, $\vect{\omega}_0 = \left[ 0.15 \ \ -0.2\ \ 0.25\right]\si{rad/s}$, $\matr{R}_0 = \matr{I}_3$. The slit is in position $\vect{p}_\star = \left[ 0 \ \ 2.5\ \ 0\right]\si{m}$ and has an orientation $\matr{R}_\star=I_3$.
At each time step, the NMPC solves the following OCP at discrete time in the prediction horizon $k=0, \dots, H-1$:
\begin{align*}
\label{eq:MPC} 
    \vect{a}_0^*,\dots,\vect{a}^*_{H-1}= & \argmin_{\vect{a}_0,\dots,\vect{a}_{H-1} \in \mathbb{R}^{3}} \quad  \sum_{k=0}^{H-1}\|\ \tilde{\vect{\omega}}_k \vect{a}_k  \|^2  \\
     & \textrm{s.t.}  \quad  \quad (\ref{eq:dt_kin1})-(\ref{eq:dt_dyn2}), (\ref{eq:constraint})\\
     & \quad \quad \quad \, |a^{i}_k|<\Bar{a}^{i}, i=1,2,3  
\end{align*}
where $a_k^i \in\mathbb{R}$ are the components of $\vect{a}_k\in\mathbb{R}^3$ and $\Bar{a}^i\in\mathbb{R}$ are the corresponding bounds.  This serves as an explicit formulation example for the OCP outlined in equation (\ref{eq:LQ}), specifically designed for the given problem. Since we implement an MPC scheme involving temporal discretization of the system, rather than optimizing a continuous-time trajectory \(a(t)\) for \(t \in [t_i, t_f]\), a sequence of discrete-time terms is determined: \(a_k := a(k \cdot T + t_i)\) for \(k = 0, \ldots, H-1\), where \(H = \frac{t_f - t_i}{T}\) and \(T\) represents the discretization interval.  
The prediction horizon of the OCP in our simulations is 1$\si{s}$ (implemented with sampling time  $T=0.1 \si{s}$ and $H=10$), while the constraint is characterised by $\epsilon_1 = 0.005, \epsilon_2 = 0.1$. The resulting control input, updated at a frequency of 500 $\si{Hz}$, takes the form (\ref{eq:tau}) with $\vect{a}=\vect{a}^*_0$.

\begin{figure*}[t]  
    \centering
    \begin{minipage}[t]{0.48\textwidth}  
        \centering
        \subfigure{
            \includegraphics[width=0.5\textwidth]{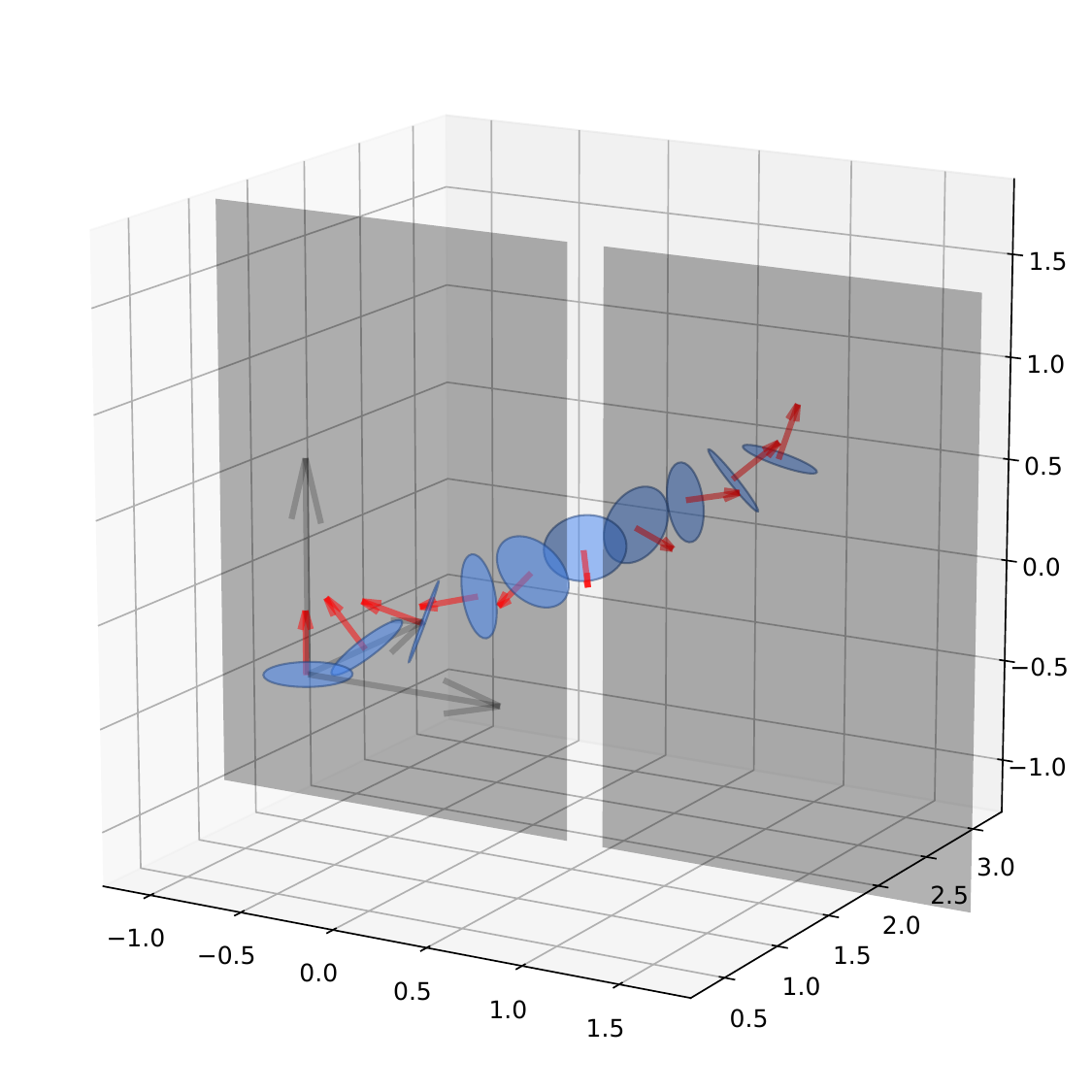}
            \label{fig:free/3d}
        } \hspace{-7mm} 
        \subfigure{
            \includegraphics[width=0.46\textwidth]{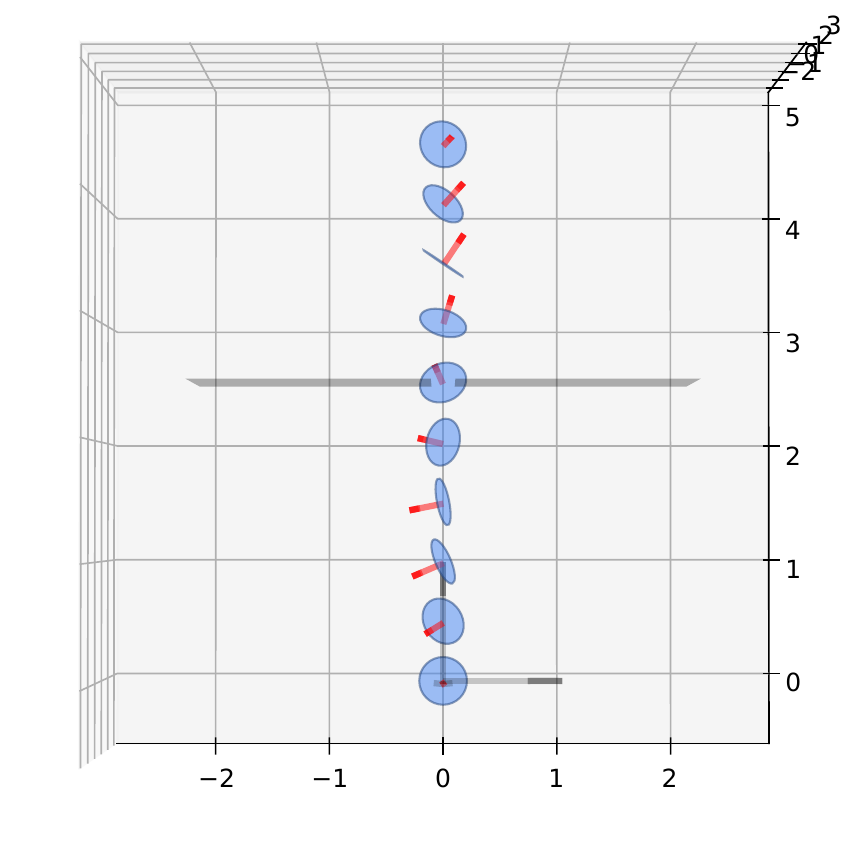}
            \label{fig:free/3dtop}
        } \\ \vspace{-2mm} 
        \subfigure{
            \includegraphics[width=0.9\textwidth]{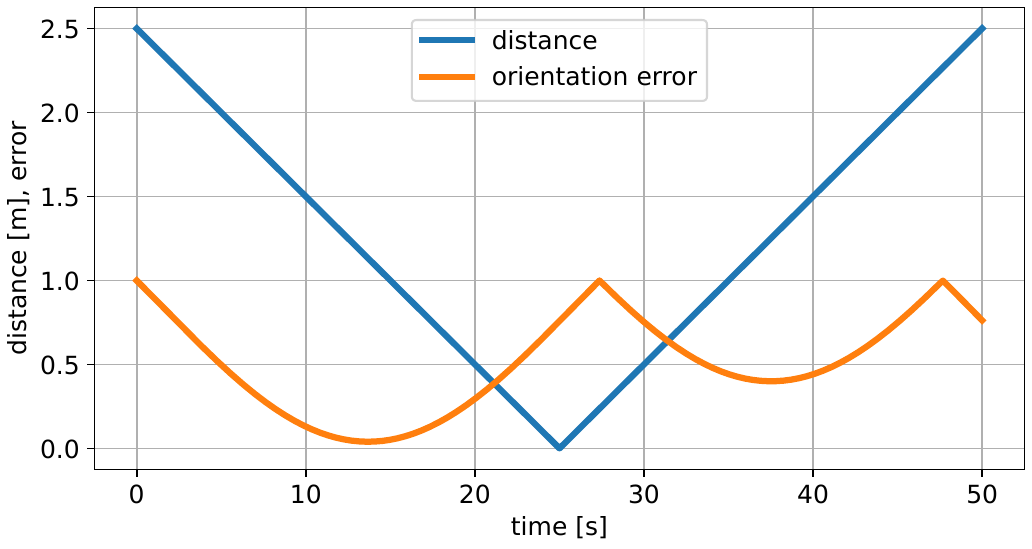}
            \label{fig:free/errors}
        }
        \caption{Free dynamics ($\vect{\tau}'=\vect{a}=\vect{0}$)}
        \label{fig:free}
    \end{minipage}%
    \hspace{5mm}  
    \begin{minipage}[t]{0.48\textwidth}  
        \centering
        \subfigure{
            \includegraphics[width=0.5\textwidth]{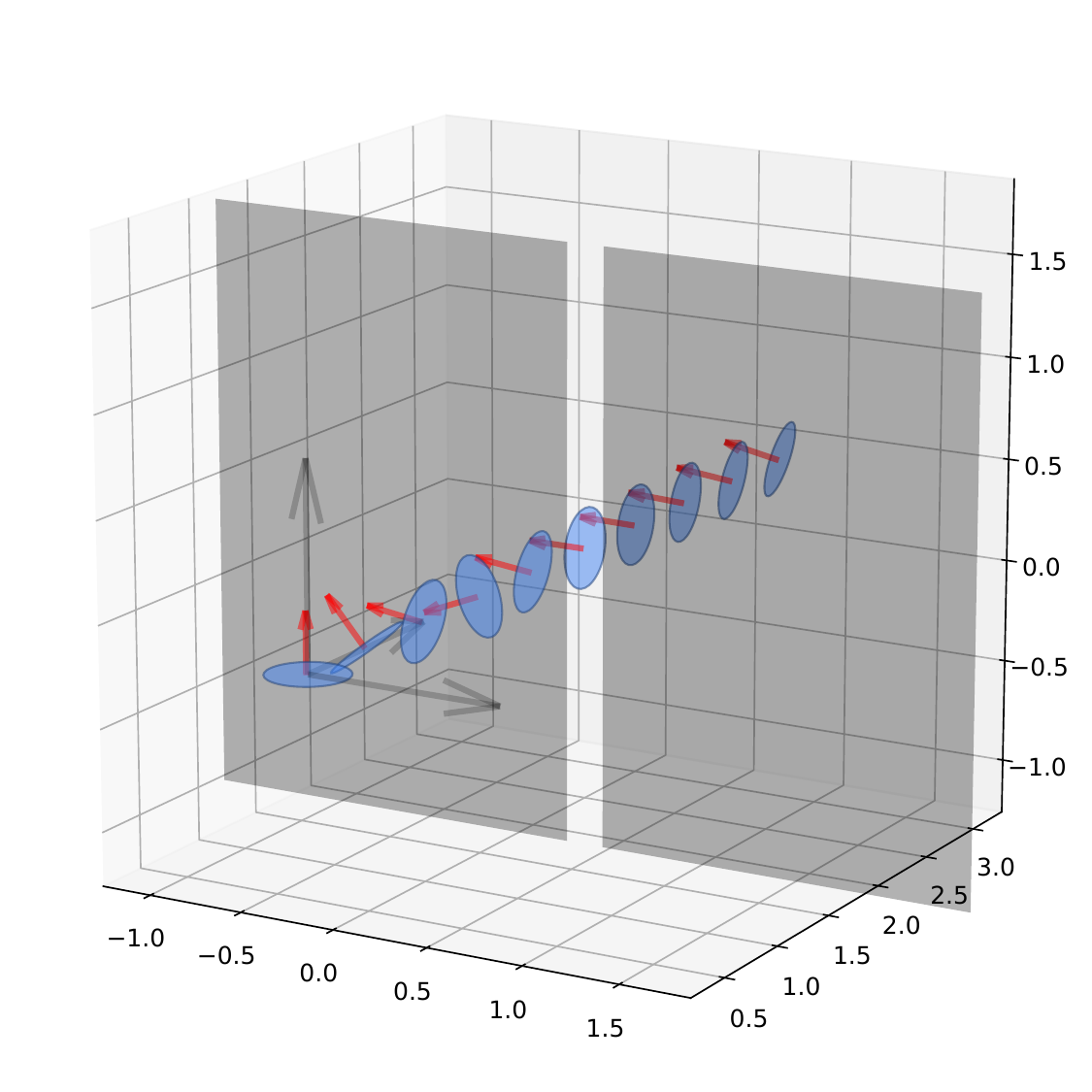}
            \label{fig:mpc/3d}
        } \hspace{-7mm} 
        \subfigure{
            \includegraphics[width=0.46\textwidth]{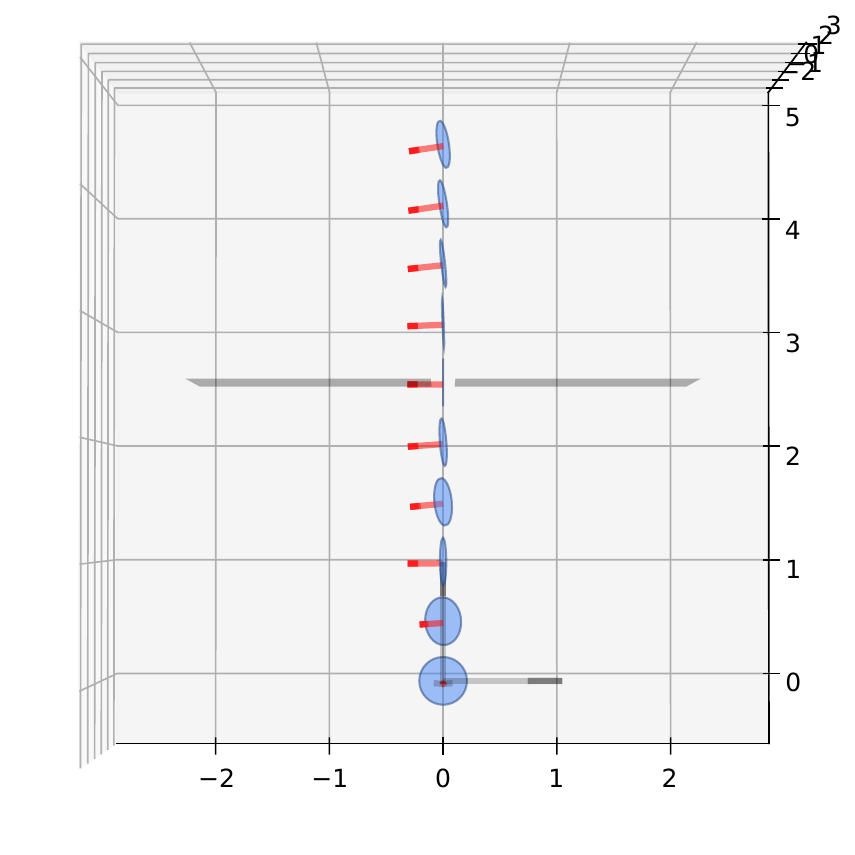}
            \label{fig:mpc/3dtop}
        } \\ \vspace{-2mm} 
        \subfigure{
            \includegraphics[width=0.9\textwidth]{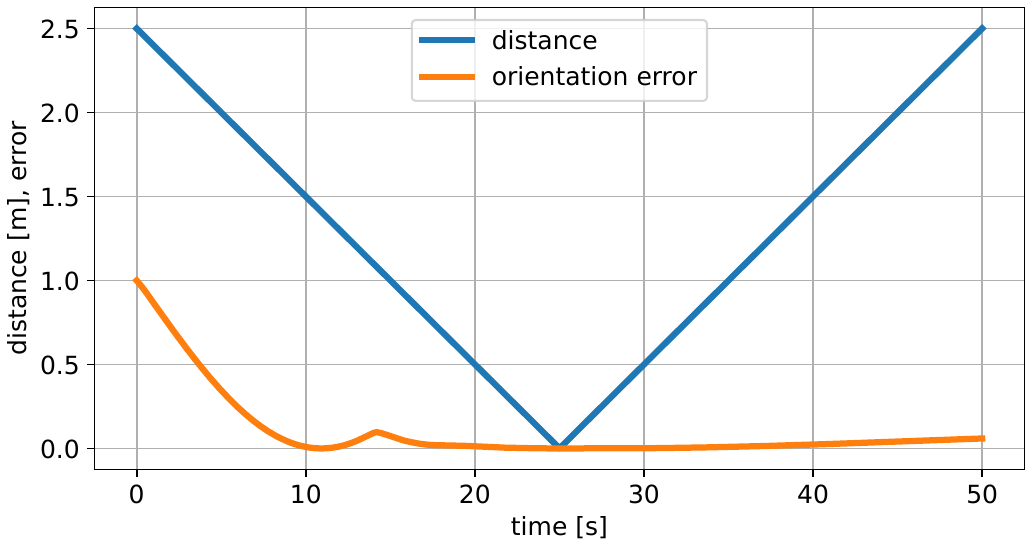}
            \label{fig:mpc/errors}
        }
        \caption{Lossless optimal control without torque control ($\vect{\tau}'=\vect{0}$)}
        \label{fig:mpc}
    \end{minipage}
\end{figure*}

\begin{figure*}[t]  
    \centering
    \begin{minipage}[t]{0.48\textwidth}  
        \centering
        \subfigure{
            \includegraphics[width=0.5\textwidth]{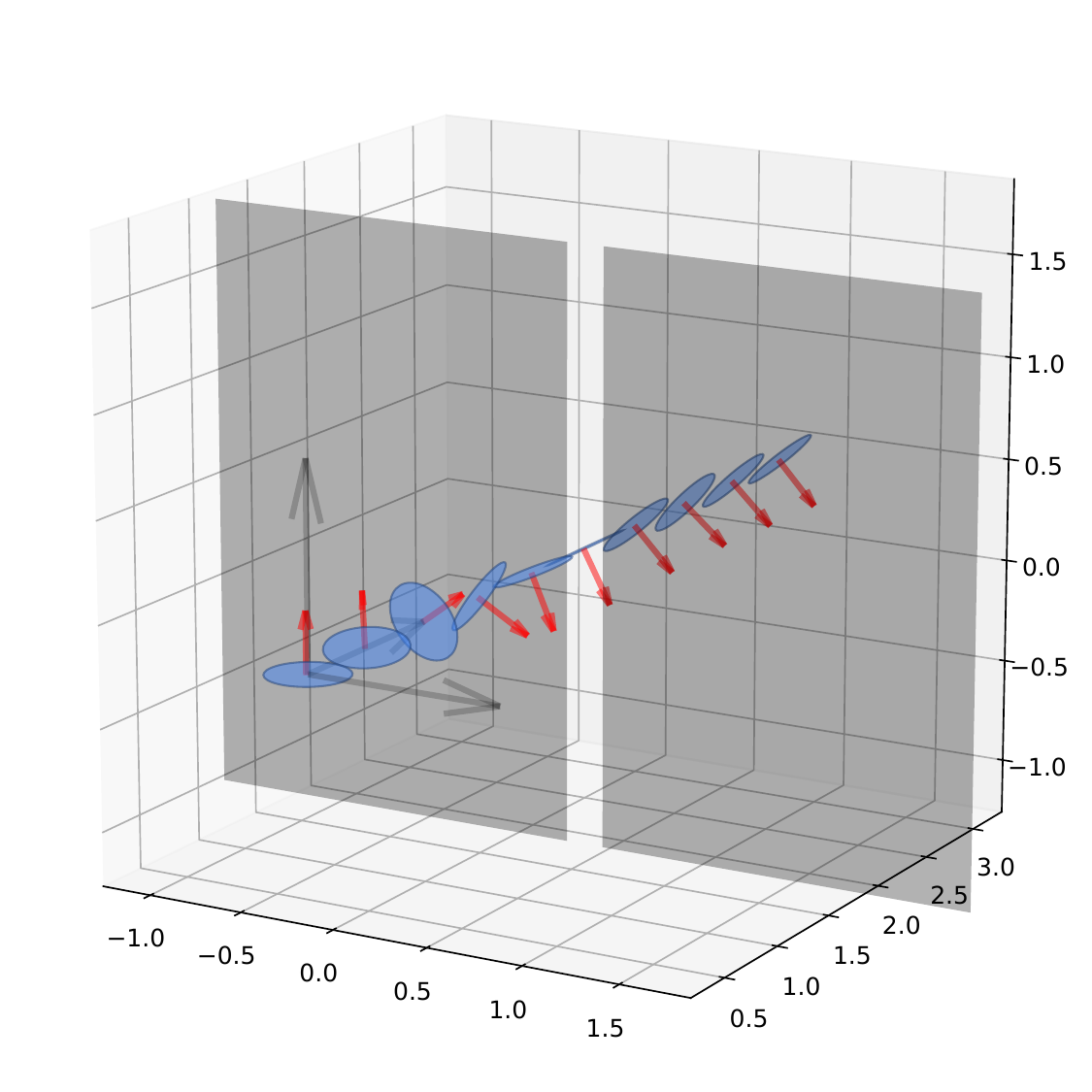}
            \label{fig:pd/3d}
        } \hspace{-7mm} 
        \subfigure{
            \includegraphics[width=0.46\textwidth]{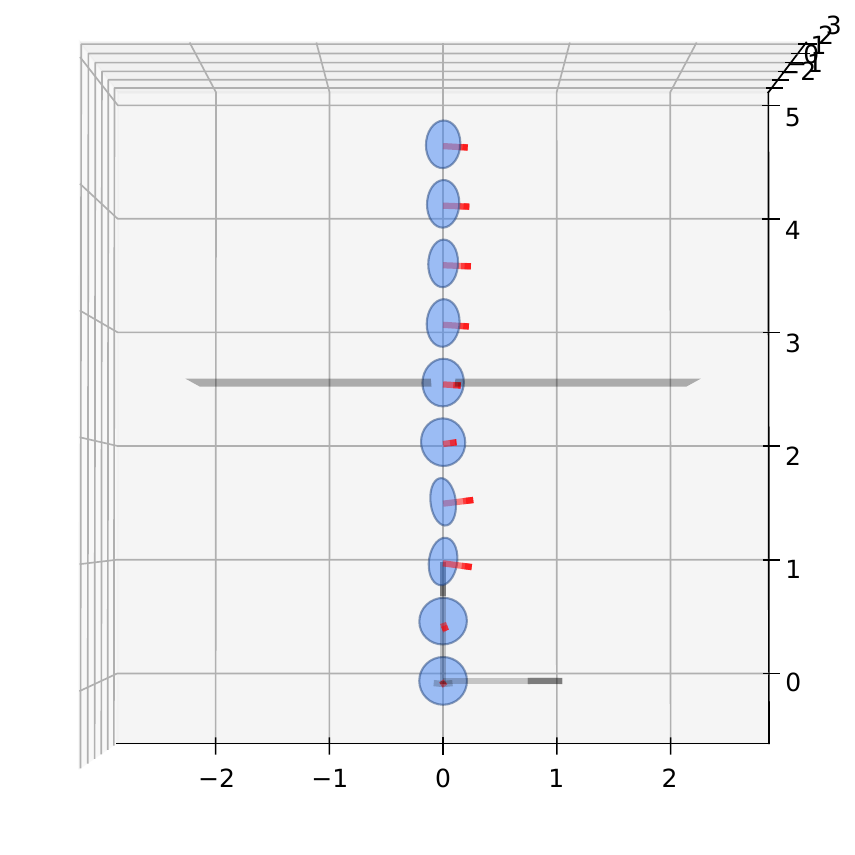}
            \label{fig:pd/3dtop}
        } \\ \vspace{-2mm} 
        \subfigure{
            \includegraphics[width=0.9\textwidth]{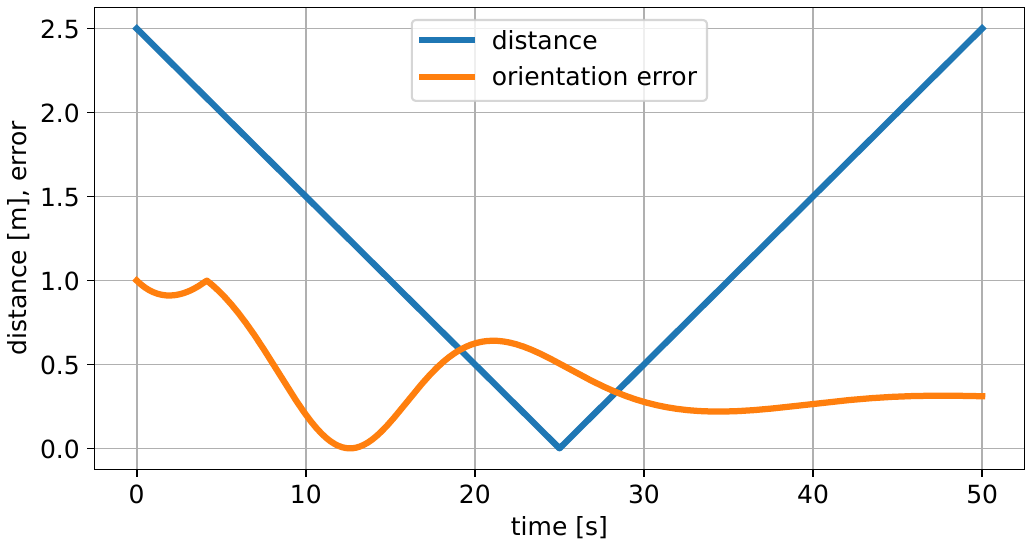}
            \label{fig:pd/errors}
        } \\ \vspace{-2mm} 
        \subfigure{
            \includegraphics[width=0.9\textwidth]{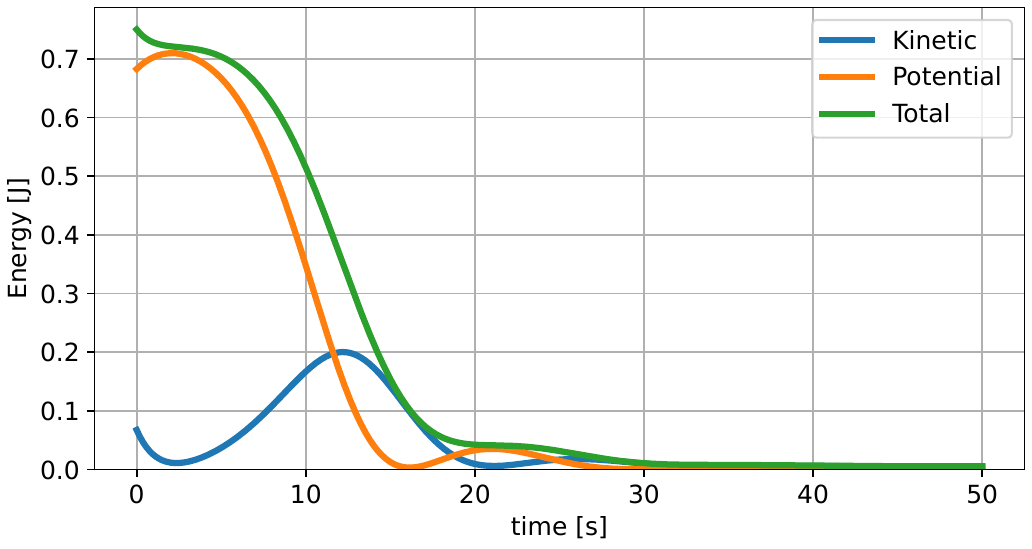}
            \label{fig:pd/energy}
        }
        \caption{Nominal attitude control alone ($\vect{a}=\vect{0}$)}
        \label{fig:pd}
    \end{minipage}%
    \hspace{5mm}  
    \begin{minipage}[t]{0.48\textwidth}  
        \centering
        \subfigure{
            \includegraphics[width=0.5\textwidth]{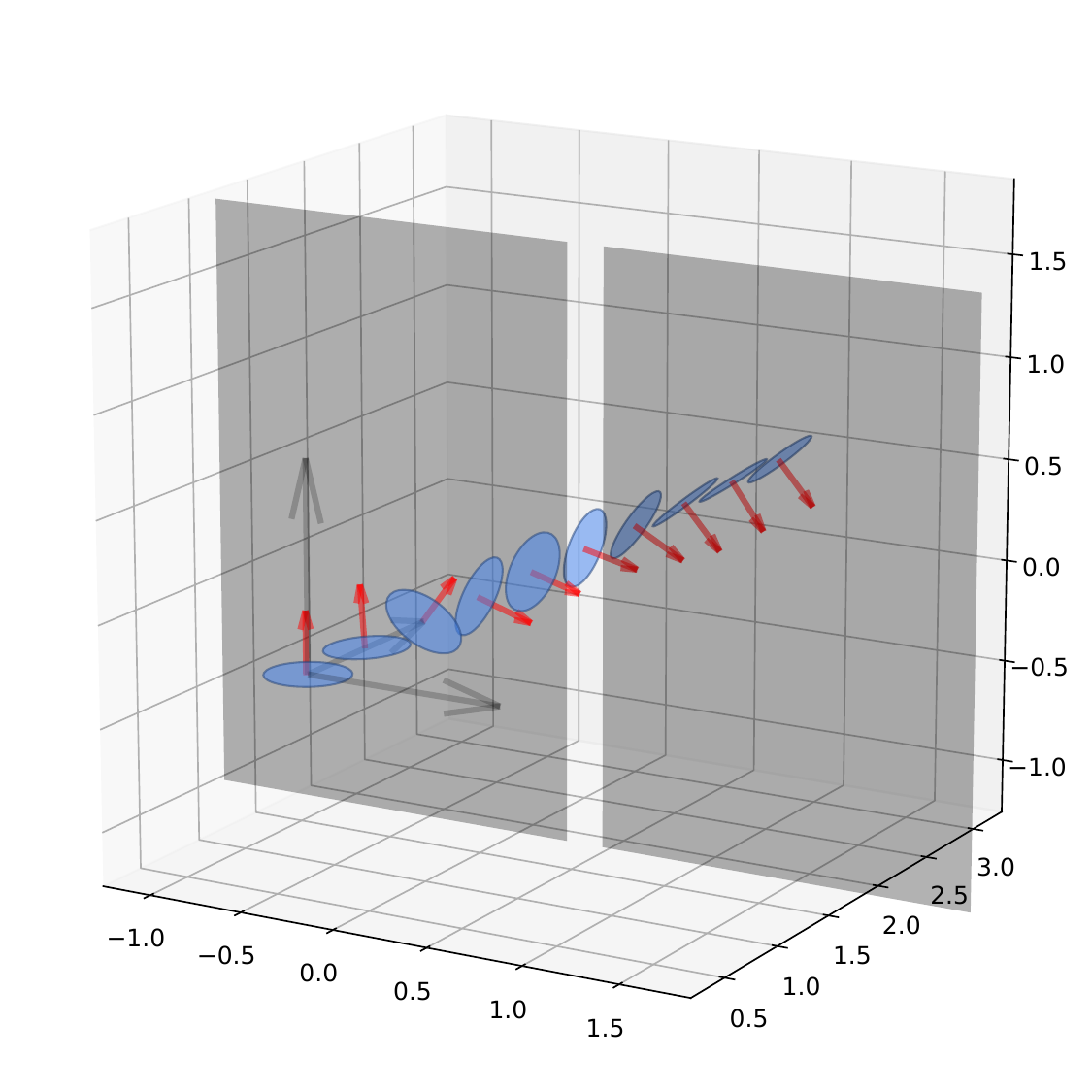}
            \label{fig:mpc_pd/3d}
        } \hspace{-7mm} 
        \subfigure{
            \includegraphics[width=0.46\textwidth]{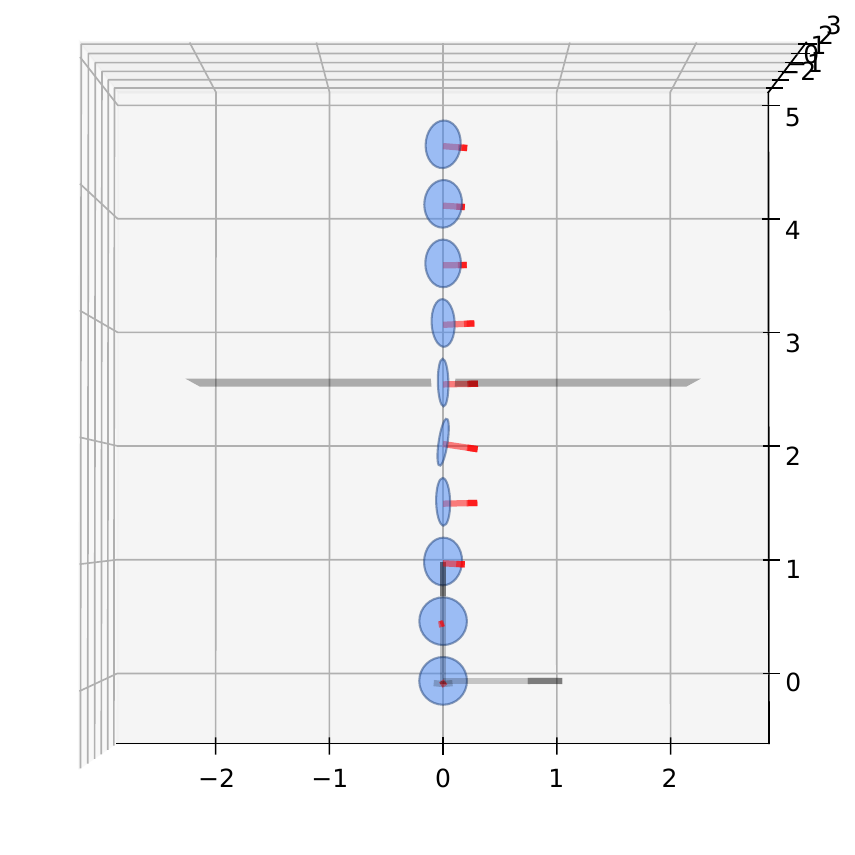}
            \label{fig:mpc_pd/3dtop}
        } \\ \vspace{-2mm} 
        \subfigure{
            \includegraphics[width=0.9\textwidth]{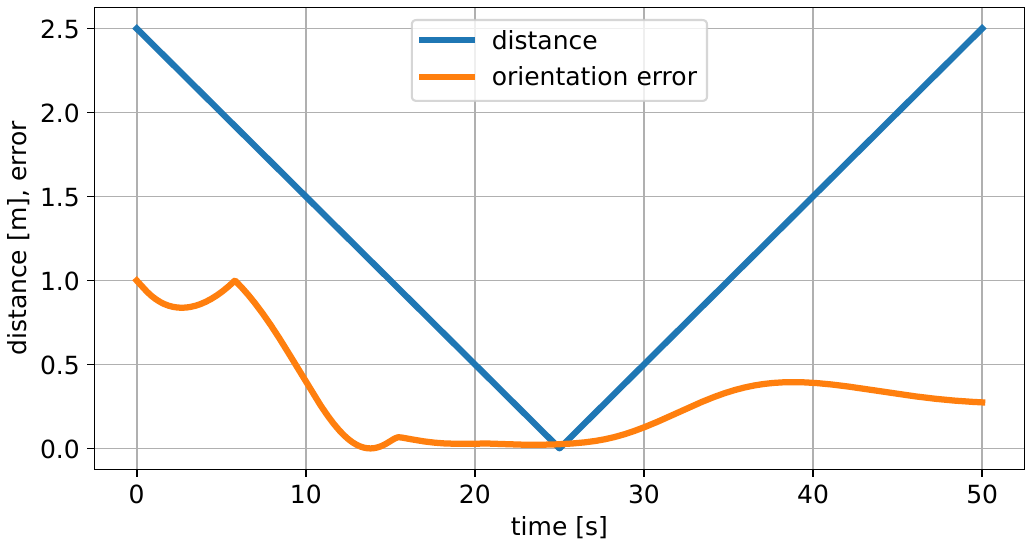}
            \label{fig:mpc_pd/errors}
        } \\ \vspace{-2mm} 
        \subfigure{
            \includegraphics[width=0.9\textwidth]{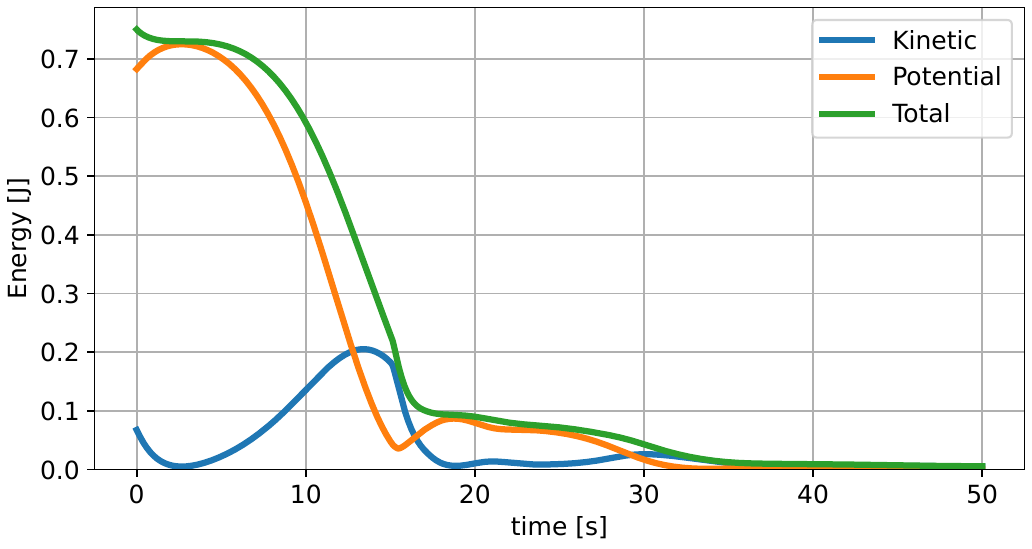}
            \label{fig:mpc_pd/energy}
        }
        \caption{Lossless optimal transient control and nominal attitude control}
        \label{fig:mpc_pd}
    \end{minipage}
\end{figure*}
 
\subsubsection{Results and discussion}
We analyse the effect of our solution under two conditions: with and without nominal torque control $\vect{\tau}'$. In the first condition, where $\vect{\tau}' = 0$, we compare the effects of lossless optimal control, reported in Fig.\ref{fig:mpc}, with the free dynamics of the rigid body (i.e., $\vect{a} = \vect{0}$), illustrated in Fig.\ref{fig:free}. In the second condition, the nominal torque control $\vect{\tau}'$ is applied using the passive controller defined in (\ref{eq:rot_ctrl}), which asymptotically stabilizes the system around the desired orientation equilibrium. In our simulations, we consider a rotation of $3\pi/4$ about the $y$-axis of the inertia frame. The effects of this torque control alone (with $\vect{a} = \vect{0}$) are shown in Fig. \ref{fig:pd}, while the combined effect of lossless optimal transient control and torque control is depicted in Fig. \ref{fig:mpc_pd}. For each simulation a 3D graphical representation of the rigid body dynamics is shown in the corresponding figure. The rigid body is depicted as a blue disc on the $x$ and $y$ axes of the body-fixed frame, with a red arrow along the $z$-axis. In the second plot of each figure, we report the distance between the rigid body and the slit $d_{p,p_\star}$ (in blue) and the orientation error  $\varepsilon_{R,R_\star}$ (in orange) over time. Ideally, for the rigid body to pass through the slit, the orientation error should be zero when the distance is at its minimum. 
The bottom plots in Fig. \ref{fig:pd} and Fig. \ref{fig:mpc_pd} show the evolution in time of the kinetic, potential, and total energy.

The simulation results show a successful validation of the proposed controller: in both cases the activation of the NMPC routine is able to steer the rigid body along the slit without compromising the stability property of the closed-loop system. Particularly interesting is the comparison of the energy functions in Fig. \ref{fig:pd} and \ref{fig:mpc_pd}: the evolution of the closed-loop Lyapunov function (\ref{eq:Lyapunov}) is in both cases non increasing (i.e., it is a valid Lyapunov function for both cases, as predicted by Proposition \ref{prop:2} ) but the way it decreases (the transient behavior) is different. In the case powered by the NMPC the system trades kinetic energy for potential energy faster, in order to allow the rigid body to find the correct orientation to pass through the slit at the right time.

\section{Conclusions}
This work introduced a novel optimisation framework based on energetic arguments which is able to modify the transient behavior of controlled rigid body without compromising the passivity/stability properties of the closed-loop system.
Future work is hinged upon addressing practical challenges in real-time implementation and extend the framework to more complex environments.
\label{sec:conc}

\renewcommand*{\bibfont}{\small}
\printbibliography

\end{document}